\renewcommand{\subsection}{\@startsection
{subsection}{2}{0mm}{\baselineskip}{-0.25cm}
{\normalfont\normalsize\em}}
\newtheorem{proposition}{Proposition}
\newtheorem{lemma}{Lemma}
{\theoremstyle{definition}
\newtheorem{definition}{Definition}
\newtheorem{example}{Example}}
\theoremstyle{remark}
\newtheorem{remark}{Remark}
\begin{document}

%% Heading %%%%%%%%%%%%%%%%%%%%%%%%%%%%%%%%%%%%%%%%%%%%%%%%%%%

\title[LRC codes with local error detection]{Locally Recoverable codes with local error detection}

\author{Carlos Munuera} 
\address{Department of Applied Mathematics, University of Valladolid, Avda Salamanca SN, 47014 Valladolid, Castilla, Spain}
\email{cmunuera@arq.uva.es}

\begin{abstract} 
A locally recoverable code is an error-correcting code  such that any erasure in a coordinate of a codeword can be recovered from a set of other few coordinates.  In this article we introduce a model of local recoverable codes that also includes local error detection.
The cases of the Reed-Solomon and Locally Recoverable Reed-Solomon codes are treated in some detail.   
\end{abstract}

\keywords{Safety in digital systems, Error-correcting code, Locally recoverable code}
\thanks{Research supported by  grant MTM2015-65764-C3-1-P MINECO/FEDER}
%\subjclass[2010]{94B27, 11G20, 11T71, 14G50, 94B05}
\maketitle

%%  Body of paper %%%%%%%%%%%%%%%%%%%%%%%%%%%%%%%%%%%%%%%%%%%%%%%%%

\section{Introduction}
\label{Intro}

Locally recoverable  codes  were introduced in \cite{GHSY}, motivated by the  use of coding theory techniques applied to distributed and cloud storage systems.
The growth of the amount of stored data  make the loss of information due to node failures a major problem.
To obtain a reliable storage,  when a node fails we want to recover the data it contains by using information from the other nodes. This is the {\em repair problem}. A method to solve it is to protect the data using error-correcting codes, \cite{GHSY}.
As typical examples of this solution, we can mention Google and Facebook, that use Reed-Solomon (RS) codes in their storage systems. The procedure is as follows: the information to be stored is a long sequence $b$ of symbols, which are elements of a finite field $\mathbb{F}_{\ell}$. This sequence is cut into blocks, $b=b_1,b_2,\dots$, of the same length,  $m$. According to the isomorphism $\mathbb{F}_{\ell}^m\cong \mathbb{F}_{\ell^m}$, each of these blocks can be seen as an element of the finite field $\mathbb{F}_q$, $q=\ell^m$. Fix an integer $k<q$. The vector $\mathbf{b}=(b_1,\dots,b_k)\in\mathbb{F}_q^k$ is encoded by using a RS code of dimension $k$ over $\mathbb{F}_q$, whose length $n$, $k<n \le q$, is equal to the number of nodes that will be used in its storage.  Then we choose $\alpha_1,\dots,\alpha_n\in\mathbb{F}_q$,  and send $b_1+b_2\alpha_i+\dots+b_k\alpha_i^{k-1}$ to the $i$-th node. When a node fails, we may recover the data it stores 
by using Lagrangian interpolation from the information of any other $k$ available nodes.

Of course other codes, apart from RS, can be used to deal with the repair problem.
Roughly speaking we can translate this problem in terms of coding theory as follows:
Let $\mathcal{C}$ be a linear code of length $n$ and dimension $k$ over $\mathbb{F}_{q}$.
A coordinate $i\in\{ 1,\dots,n\}$ is {\em locally recoverable with locality $r$} if there is a {\em recovery set} $R\subseteq \{1,\dots,n\}$ with $i\not\in R$ and $\# R= r$, such that for any codeword $\mathbf{x}\in\mathcal{C}$, an erasure in a coordinate $x_i$ of $\mathbf{x}$  can be  recovered by using the information given by the coordinates of $\mathbf{x}$ with indices in $R$. 
The code $\mathcal{C}$ is {\em locally recoverable (LRC) with locality} $\le r$   if each coordinate is so.  The {\em locality} of $\mathcal{C}$ is the smallest $r$ verifying this condition. For example, MDS codes of dimension $k$ (and RS codes in particular) have locality $k$. In Section \ref{LRC} we will specify some of these definitions.

Local recovery, understood in the previous terms, presents a clear drawback: when any of the coordinates used to recover $x_i$ contains an error, then the recovery will be wrong and both errors will remain undetected.  Moreover, the new errors created in this way could overcome the correcting capacity of $\mathcal{C}$ and therefore they should be impossible to eliminate, even using the code globally. Therefore, it may be appropriate to consider recovery sets that also allow local detection of errors in the coordinates used in the recovery process.

In this article we propose a model of locally recoverable codes that also gives local error detection. This is done in Section \ref{LREDC}. In Section \ref{Examples}  the particular cases of RS and LRC-RS codes will be dealt in  detail. 
As we will see, for RS codes it is enough to include one more coordinate in a recovery set to enable error detection.

\section{Locally recoverable codes}
\label{LRC}

In this section we state some definitions and facts concerning LRC codes that will be necessary for the rest of the work.
Let $\mathcal{C}$ be a $[n,k,d]$ code. Let $\mathbf{G}$ be a generator matrix of $\mathcal{C}$ with columns $\mathbf{c}_1,\dots, \mathbf{c}_n$. Given a set $R\subset \{1,\dots,n\}$ and a coordinate $i\notin R$, we say that $R$ is a {\em recovery set} for $i$ if $\mathbf{c}_i\in \langle \mathbf{c}_j : j\in R \rangle$, the linear space spanned by $\{ \mathbf{c}_j : j\in R\}$, see \cite{GHSY}.

Let $\pi_R:\mathbb{F}_q^n\rightarrow \mathbb{F}_q^r$ be the projection on the coordinates of $R$, where $r=\# R$. For $\mathbf{x}\in \mathbb{F}_q^n$, we write $\mathbf{x}_R=\pi_R(\mathbf{x})$. We will consider the punctured and shortened codes $\mathcal{C}[R]=\{ \mathbf{x}_R : \mathbf{x}\in \mathcal{C} \}$ and $\mathcal{C}[[R]]=\{ \mathbf{x}_R  : \mathbf{x}\in \mathcal{C}, \mbox{supp}(\mathbf{x})\subseteq R \}$. The following relation between these codes is well known \cite[Prop. 3.1.17]{Pel2}. Here we denote by $\mathcal{C}^{\perp}$ the dual of $\mathcal{C}$.

\begin{lemma} \label{dualL}
$\mathcal{C}[R]^{\perp}=\mathcal{C}^{\perp}[[R]]$.
\end{lemma}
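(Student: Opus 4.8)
The statement $\mathcal{C}[R]^{\perp} = \mathcal{C}^{\perp}[[R]]$ is an equality of subspaces of $\mathbb{F}_q^r$ with $r = \#R$, so the plan is to show each is contained in the other, or — more efficiently — to prove one inclusion and then match dimensions. First I would set up notation: write $R = \{i_1 < \cdots < i_r\}$, identify $\mathbb{F}_q^r$ with the coordinates indexed by $R$, and recall that $\mathcal{C}[R] = \pi_R(\mathcal{C})$ while $\mathcal{C}^\perp[[R]]$ consists of the vectors $\mathbf{y}_R$ where $\mathbf{y} \in \mathcal{C}^\perp$ has support inside $R$ (equivalently, the coordinates of $\mathcal{C}^\perp$ supported on $R$, truncated to $R$).

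For the easy inclusion $\mathcal{C}^\perp[[R]] \subseteq \mathcal{C}[R]^\perp$: take $\mathbf{y} \in \mathcal{C}^\perp$ with $\mathrm{supp}(\mathbf{y}) \subseteq R$ and any $\mathbf{x} \in \mathcal{C}$. Then the standard inner product $\langle \mathbf{x}_R, \mathbf{y}_R \rangle = \sum_{j \in R} x_j y_j = \sum_{j=1}^n x_j y_j = \langle \mathbf{x}, \mathbf{y} \rangle = 0$, where the middle equality uses that $y_j = 0$ for $j \notin R$. Hence $\mathbf{y}_R$ is orthogonal to every element of $\mathcal{C}[R]$, i.e. $\mathbf{y}_R \in \mathcal{C}[R]^\perp$.

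It then remains to verify that the two spaces have the same dimension. On one side, $\dim \mathcal{C}[R]^\perp = r - \dim \mathcal{C}[R]$. On the other, I would use the general fact that shortening is dual to puncturing in the complementary sense, but here more directly: $\dim \mathcal{C}^\perp[[R]]$ equals the number of linearly independent parity checks of $\mathcal{C}$ supported on $R$, which by a rank computation equals $r - \dim \pi_R(\mathcal{C}) = r - \dim \mathcal{C}[R]$. Concretely, if $H$ is the submatrix of a parity-check matrix of $\mathcal{C}$ formed by the columns indexed by $R$, then the parity checks supported on $R$ correspond to the left kernel of $H$ inside the row space, and $\dim(\text{row space of } H) = \mathrm{rank}(H) = \dim \pi_R(\mathcal{C})$ (the last equality because puncturing $\mathcal{C}^\perp$-checks onto $R$ is the transpose operation to projecting $\mathcal{C}$ onto $R$). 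Matching the two dimensions with the established inclusion gives equality.

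The main obstacle — really the only subtle point — is the dimension count for $\mathcal{C}^\perp[[R]]$: one must be careful that shortening a code on $R$ is literally taking the subcode supported on $R$ and that its dimension is $\dim \mathcal{C}^\perp - (\text{rank of the map } \mathcal{C}^\perp \to \mathbb{F}_q^{n-r} \text{ forgetting } R)$, and then that this rank equals $(n-k) - (r - \dim \mathcal{C}[R])$ after applying $\dim \mathcal{C}^\perp = n - k$ and $\dim \mathcal{C}[R] = k - \dim \mathcal{C}[[\bar{R}]]$-type bookkeeping. Since the excerpt already cites this as well known (\cite[Prop. 3.1.17]{Pel2}), in the write-up I would likely just give the short orthogonality argument for one inclusion and invoke the standard rank–nullity identities relating puncturing and shortening to close the dimension gap, rather than reprove the folklore from scratch.
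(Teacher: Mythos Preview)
The paper does not actually prove this lemma: it merely states it as well known and cites \cite[Prop.~3.1.17]{Pel2}, so there is no in-paper argument to compare against.

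Your plan is correct in outline, but you are working harder than necessary. The reverse inclusion $\mathcal{C}[R]^{\perp}\subseteq\mathcal{C}^{\perp}[[R]]$ follows by the same one-line computation as your forward inclusion: given $\mathbf{z}\in\mathcal{C}[R]^{\perp}$, extend it by zeros outside $R$ to obtain $\mathbf{y}\in\mathbb{F}_q^{n}$; then for every $\mathbf{x}\in\mathcal{C}$ one has $\mathbf{x}\cdot\mathbf{y}=\mathbf{x}_R\cdot\mathbf{z}=0$, so $\mathbf{y}\in\mathcal{C}^{\perp}$ with $\mbox{supp}(\mathbf{y})\subseteq R$, hence $\mathbf{z}=\mathbf{y}_R\in\mathcal{C}^{\perp}[[R]]$. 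This bypasses the dimension bookkeeping entirely. Incidentally, the specific claim in your rank sketch that ``$\mathrm{rank}(H)=\dim\pi_R(\mathcal{C})$'' is not right as stated: if $H$ denotes the submatrix of a parity-check matrix on the columns indexed by $R$, then its rank is $\dim\mathcal{C}^{\perp}[R]$, not $\dim\mathcal{C}[R]$. Closing the dimension count along those lines then requires an auxiliary identity such as $(\mathcal{C}^{\perp}\cap W)^{\perp}=\mathcal{C}+W^{\perp}$ for $W$ the coordinate subspace supported on $R$, which is exactly the sort of step you correctly flagged as the subtle point; the direct double-inclusion avoids it.
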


Note that $\mathbf{c}_i\in \langle \mathbf{c}_j : j\in R \rangle$ if and only if $\dim(\mathcal{C}[R])=\dim(\mathcal{C}[\overline{R}])$, where $\overline{R}=R\cup\{ i\}$, so the notion of recovery set does not depend on the generator matrix chosen. In this case, there exist $w_1,\dots,w_n\in\mathbb{F}_q$ such that  $\sum w_j\mathbf{c}_j=0$ with $w_i\neq 0$ and $w_j=0$ if $j\notin \overline{R}$. Then  $\mathbf{w}=(w_1,\dots,w_n)\in \mathcal{C}^{\perp}$ and $\mathbf{w}_{\overline{R}}\in\mathcal{C}^{\perp}[[\overline{R}]]$.
So we have the following result.

\begin{lemma} \label{palabradual}
$R$ is a recovery set for a coordinate $i$ if and only if there exists a word $\mathbf{w}_{\overline{R}}\in\mathcal{C}^{\perp}[[\overline{R}]]$ with $w_i\neq 0$. In this case $\# R \ge d(\mathcal{C}^{\perp})-1$.
\end{lemma}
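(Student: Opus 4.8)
The plan is to unwind the definition of recovery set through the correspondence of Lemma \ref{dualL}, and then read off the inequality from the minimum distance of the dual. I will establish the two implications separately and then the bound.

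For the forward implication, suppose $R$ is a recovery set for $i$, so that $\mathbf{c}_i\in\langle\mathbf{c}_j:j\in R\rangle$ by definition. Writing $\mathbf{c}_i=\sum_{j\in R}a_j\mathbf{c}_j$ and setting $w_i=-1$, $w_j=a_j$ for $j\in R$, and $w_j=0$ otherwise, I obtain $\mathbf{w}=(w_1,\dots,w_n)$ with $\sum_{j=1}^n w_j\mathbf{c}_j=0$; since the $\mathbf{c}_j$ are the columns of the generator matrix $\mathbf{G}$, this says precisely $\mathbf{G}\mathbf{w}^{T}=0$, i.e. $\mathbf{w}\in\mathcal{C}^{\perp}$. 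By construction $\mathrm{supp}(\mathbf{w})\subseteq\overline{R}$ and $w_i\neq 0$, so $\mathbf{w}_{\overline{R}}\in\mathcal{C}^{\perp}[[\overline{R}]]$ is the desired word. For the converse, given $\mathbf{w}_{\overline{R}}\in\mathcal{C}^{\perp}[[\overline{R}]]$ with $w_i\neq 0$, the definition of the shortened code provides $\mathbf{w}\in\mathcal{C}^{\perp}$ with $\mathrm{supp}(\mathbf{w})\subseteq\overline{R}$ projecting to $\mathbf{w}_{\overline{R}}$. Then $\mathbf{G}\mathbf{w}^{T}=0$ reads $\sum_{j\in\overline{R}}w_j\mathbf{c}_j=0$, hence $w_i\mathbf{c}_i=-\sum_{j\in R}w_j\mathbf{c}_j$; dividing by $w_i\neq 0$ yields $\mathbf{c}_i\in\langle\mathbf{c}_j:j\in R\rangle$, so $R$ is a recovery set for $i$.

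For the inequality, note that in either direction the word $\mathbf{w}\in\mathcal{C}^{\perp}$ just constructed is nonzero, since $w_i\neq 0$, and its support is contained in $\overline{R}=R\cup\{i\}$, so its Hamming weight is at most $\#\overline{R}=\#R+1$. As every nonzero codeword of $\mathcal{C}^{\perp}$ has weight at least $d(\mathcal{C}^{\perp})$, this gives $d(\mathcal{C}^{\perp})\le\#R+1$, that is $\#R\ge d(\mathcal{C}^{\perp})-1$.

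I do not expect a genuine obstacle here: the argument is essentially a dictionary translation between column dependences in $\mathbf{G}$ and codewords of $\mathcal{C}^{\perp}$ supported on $\overline{R}$, together with the trivial weight-support bound. The only point that needs a moment's care is that the lifted word $\mathbf{w}$ is honestly nonzero, so that the minimum-distance inequality is applicable, and this is exactly what the hypothesis $w_i\neq 0$ secures.
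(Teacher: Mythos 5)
Your argument is correct and is essentially the paper's own: the text preceding the lemma obtains the dual word exactly as you do, by turning the column dependence $\mathbf{c}_i=\sum_{j\in R}a_j\mathbf{c}_j$ into a codeword of $\mathcal{C}^{\perp}$ supported on $\overline{R}$ with $w_i\neq 0$, and the bound follows from the same weight-versus-support observation. Your write-up merely makes the converse direction and the nonvanishing of $\mathbf{w}$ explicit, which the paper leaves implicit.
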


The smallest cardinality of a recovery set $R$ for coordinate $i$ is the {\em locality} of  $i$. The locality of $\mathcal{C}$ is the largest locality of any of its coordinates.

A word $\mathbf{w}_{\overline{R}}\in\mathcal{C}^{\perp}[[\overline{R}]]$ with $w_i\neq 0$  not only provides a recovery set but a recovery method: for every $\mathbf{x}\in \mathcal{C}$ we have $\mathbf{w}\cdot\mathbf{x}=\mathbf{w}_{\overline{R}}\cdot \mathbf{x}_{\overline{R}}=0$, where $\cdot$ denotes the usual inner product,  $\mathbf{w}\cdot\mathbf{x}=w_1x_1+\dots+w_nx_n$, so
\begin{equation}\label{rec}
x_i=x_i(\mathbf{w})=-w_i ^{-1}(\mathbf{w}_{R}\cdot \mathbf{x}_{R}).
\end{equation}
Then an undetected error in $\mathbf{x}_{R}$ leads to a wrong recovering of $x_i$.  
Such an error could be detected by using the detecting capability $\mathcal{C}$ or, alternatively, by using several recovery sets for coordinate $i$ (if available). But the first option goes against the local character of our method, while the second one increases the number of coordinates involved, worsening the probability of error, and does not allow us to determine which of the recovery sets used contains the error.

\section{Locally recoverable error-detecting codes}
\label{LREDC}

In this section we slightly modify the notion of recovery set to allow local detection of errors. Our starting point is the following result. 

\begin{lemma} \label{d-dim}
The minimum distance of $\mathcal{C}$ is $\ge d$ if and only if for all $S\subseteq \{1,\dots,n\}$ with $\# S> n-d$ we have $\dim(\mathcal{C}[S])=\dim(\mathcal{C})$.
\end{lemma}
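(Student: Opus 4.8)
The statement relates the minimum distance of $\mathcal{C}$ to the dimensions of its punctured codes $\mathcal{C}[S]$. The plan is to prove both implications by contraposition, using the elementary fact that $\dim(\mathcal{C}[S]) = \dim(\mathcal{C}) - \dim(\mathcal{C}[[\,\overline{S}\,]])$, where $\overline{S} = \{1,\dots,n\}\setminus S$; equivalently, $\dim(\mathcal{C}[S]) < \dim(\mathcal{C})$ if and only if there is a nonzero codeword supported inside $\overline{S}$. This is just the rank-nullity statement for the projection $\pi_S$ restricted to $\mathcal{C}$: the kernel of $\pi_S|_{\mathcal{C}}$ consists exactly of those codewords whose support avoids $S$, i.e. is contained in $\overline{S}$.

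First I would prove the forward direction. Assume $d(\mathcal{C}) \ge d$ and take any $S$ with $\#S > n - d$, so $\#\overline{S} = n - \#S < d$. If $\dim(\mathcal{C}[S]) < \dim(\mathcal{C})$, then by the observation above there is a nonzero $\mathbf{x}\in\mathcal{C}$ with $\operatorname{supp}(\mathbf{x})\subseteq\overline{S}$, hence $\operatorname{wt}(\mathbf{x}) \le \#\overline{S} < d$, contradicting $d(\mathcal{C}) \ge d$. Therefore $\dim(\mathcal{C}[S]) = \dim(\mathcal{C})$ for every such $S$.

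For the converse, assume $d(\mathcal{C}) < d$, say there is a nonzero $\mathbf{x}\in\mathcal{C}$ with $\operatorname{wt}(\mathbf{x}) = w \le d - 1$. Put $T = \operatorname{supp}(\mathbf{x})$ and choose $S = \{1,\dots,n\}\setminus T$, so $\#S = n - w \ge n - d + 1 > n - d$. Since $\operatorname{supp}(\mathbf{x})\subseteq\overline{S} = T$ and $\mathbf{x}\neq 0$, the kernel of $\pi_S|_{\mathcal{C}}$ is nontrivial, so $\dim(\mathcal{C}[S]) < \dim(\mathcal{C})$; this exhibits an $S$ with $\#S > n - d$ violating the dimension condition. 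By contraposition, if the dimension condition holds for all such $S$, then $d(\mathcal{C}) \ge d$.

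The argument is short and the only thing to be careful about is the index bookkeeping relating $\#S > n-d$ to $\#\overline{S} < d$ (and the corresponding inequality in the converse, where a weight-$w$ codeword with $w\le d-1$ gives a complement of size $\ge n-d+1$); there is no real obstacle. If one prefers, the whole lemma can alternatively be read off from Lemma~\ref{dualL} together with the fact that $d(\mathcal{C}) > n - \#S$ forces $\mathcal{C}^\perp[[\,\overline{S}\,]] = 0$, but the direct kernel argument above seems cleanest.
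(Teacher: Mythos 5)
Your proof is correct; the paper itself gives no proof but cites \cite[Prop. 4.3.12]{Pel2}, and the argument there is essentially the same one you use (a nonzero codeword supported in the complement of $S$ exists precisely when the projection $\pi_S|_{\mathcal{C}}$ has nontrivial kernel, i.e. $\dim(\mathcal{C}[S])<\dim(\mathcal{C})$). Nothing further is needed.
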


The proof of this Lemma can be found in \cite[Prop. 4.3.12]{Pel2}. Next proposition is a direct consequence of  Lemma \ref{d-dim} and leads us to the following definition of recovery set detecting errors.

\begin{proposition}\label{d=2}
A set $\overline{R}\setminus\{i\}$ is a recovery set for every coordinate $i\in\overline{R}$ if and only if $d(\mathcal{C}[\overline{R}])>1$.
\end{proposition}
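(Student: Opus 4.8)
The plan is to unwind both sides of the equivalence through the punctured code $\mathcal{C}[\overline{R}]$ and apply Lemma \ref{d-dim} with the roles of the "ambient" code and length played by $\mathcal{C}[\overline{R}]$ and $\overline{R}$. First I would fix notation: write $\overline{R}$ with $r+1 = \#\overline{R}$ coordinates, let $i\in\overline{R}$, and set $R = \overline{R}\setminus\{i\}$. The key observation is that "$R$ is a recovery set for $i$" means, by the characterization recalled just after Lemma \ref{palabradual}, that $\dim(\mathcal{C}[R]) = \dim(\mathcal{C}[\overline{R}])$; and passing to the code $\mathcal{D}:=\mathcal{C}[\overline{R}]$ of length $r+1$, the set $R$ is exactly a subset of the coordinates of $\mathcal{D}$ with $\#R = (r+1)-1 = \#\overline{R}-1$, while $\mathcal{D}[R] = \mathcal{C}[R]$. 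So the condition becomes $\dim(\mathcal{D}[R]) = \dim(\mathcal{D})$ for every one-element puncturing of $\mathcal{D}$.

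Next I would invoke Lemma \ref{d-dim} applied to $\mathcal{D}$ with $d=2$: the minimum distance of $\mathcal{D}$ is $\ge 2$ if and only if for every $S$ with $\#S > (r+1)-2$, i.e. $\#S \ge r$, we have $\dim(\mathcal{D}[S]) = \dim(\mathcal{D})$. Since $\mathcal{D}$ has length $r+1$, the sets $S$ with $\#S\ge r$ are precisely $\mathcal{D}$ itself (where the dimension condition is trivial) and the sets $\overline{R}\setminus\{i\}$ for $i\in\overline{R}$. Hence $d(\mathcal{D})\ge 2$ is equivalent to $\dim(\mathcal{D}[\overline{R}\setminus\{i\}]) = \dim(\mathcal{D})$ for all $i\in\overline{R}$, which by the previous paragraph is equivalent to "$\overline{R}\setminus\{i\}$ is a recovery set for $i$ for every $i\in\overline{R}$." Finally I would note that $d(\mathcal{D})\ge 2$ is the same as $d(\mathcal{C}[\overline{R}])>1$, closing the equivalence.

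One small technical point to handle carefully: Lemma \ref{d-dim} is stated for a code and its true length, so I must make sure I am consistently treating $\mathcal{D}=\mathcal{C}[\overline{R}]$ as a code of length $\#\overline{R}$ and that $\mathcal{D}[S]$ for $S\subseteq\overline{R}$ coincides with $\mathcal{C}[S]$ — this is immediate from the definition of puncturing but worth a line. I expect the main (minor) obstacle to be purely bookkeeping: getting the inequality $\#S > n-d$ to translate correctly into "$S$ misses at most one coordinate of $\overline{R}$" when $n$ is replaced by $\#\overline{R}$ and $d$ by $2$, and checking that the degenerate case $S=\overline{R}$ contributes nothing. No deep idea is needed beyond recognizing that Proposition \ref{d=2} is exactly the $d=2$ instance of Lemma \ref{d-dim} read inside the punctured code.
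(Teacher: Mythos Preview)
Your proof is correct and matches the paper's intended argument: the paper states that Proposition~\ref{d=2} is a direct consequence of Lemma~\ref{d-dim}, and what you have written is precisely the $d=2$ instance of that lemma applied to the punctured code $\mathcal{D}=\mathcal{C}[\overline{R}]$. One tiny citation slip: the equivalence ``$R$ is a recovery set for $i$ iff $\dim(\mathcal{C}[R])=\dim(\mathcal{C}[\overline{R}])$'' is recalled just \emph{before} Lemma~\ref{palabradual}, not after it.
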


\begin{definition}\label{def}
A set $R\subseteq\{1,\dots,n\}$ is called a {\em recovery set detecting $t\ge 0$ errors} (or simply a {\em $t$-edr set} for short) for a coordinate $i\not\in R$ if $d(\mathcal{C}[\overline{R}])>t+1$, where $\overline{R}=R\cup\{i\}$.
\end{definition}

Then, a $t$-edr set $R$ for a coordinate $i$ is a recovery set for $i$; furthermore $R\cup\{i\}\setminus\{j\}$ is a $t$-edr set for all $j\in R$. If $\# R=r$ note that according to Lemma \ref{d-dim}, $R$ is a $t$-edr set for $i$ if and only if $\dim(\mathcal{C}[S])=\dim(\mathcal{C}[\overline{R}])$ for all $S\subseteq \overline{R}$ with $\# S\ge r-t$, where as above $\overline{R}=R\cup\{i\}$.
In particular $\dim(\mathcal{C}[\overline{R}])\le r-t$.
On the other hand, since $d(\mathcal{C}[\overline{R}])>t+1$ implies $d(\mathcal{C}[R])\ge t+1$, up to $t$ errors in any codeword $\mathbf{x}_R$, $\mathbf{x}\in\mathcal{C}$, may be detected, \cite[Sect. 2.4.1]{Pel2}. So when at most $t$ errors occur in $\mathbf{x}_R$, an  $t$-edr set $R$ either detects that errors occurred or either gives the correct value of $x_i$.  Checking for errors and recovering erasures may be performed by using appropriate words from the dual (as in (\ref{rec}), see examples of Section \ref{Examples}) or by other methods.

The minimum cardinality of a $t$-edr set for coordinate $i$ is the $t$-locality of $i$.  The code $\mathcal{C}$ is called {\em locally recoverable $t$-error-detecting code} ($t$-LREDC) if for every coordinate, a recovery set  detecting $t$ errors exists. Note that every code of minimum distance $d>t+1$ is a $t$-LREDC code (simply take $\overline{R}=\{1,\dots,n\}$). 
The maximum over the $t$-localities of all coordinates  is the $t$-locality of $\mathcal{C}$, denoted $r_t=r_t(\mathcal{C})$. For example,
since puncturing $<d$ times an MDS code gives a new MDS code of the same dimension, the $t$-locality of a $[n,k,d]$ MDS code with $d>t+1$ is $r_t=k+t$.  

Next we give two bounds on $r_t(\mathcal{C})$. The first one generalizes the bound given in Lemma \ref{palabradual} for $r=r_0$, by using generalized Hamming weights (see \cite[Sect. 4.5.1]{Pel2} for the definition of these weights).

\begin{proposition}\label{dualP}
Let $\mathcal{C}$ be a $[n,k,d]$ $t$-LREDC code. The $t$-locality  $r_t$ of $\mathcal{C}$ verifies
$r_t(\mathcal{C})\ge d_{t+1}(\mathcal{C}^{\perp})-1$, where $d_{t+1}(\mathcal{C}^{\perp})$ is the $(t+1)$-th generalized Hamming weight of $\mathcal{C}^{\perp}$.
\end{proposition}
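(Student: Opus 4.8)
The plan is to prove the sharper statement that \emph{every} $t$-edr set $R$ for a coordinate $i$ satisfies $\#R\ge d_{t+1}(\mathcal{C}^{\perp})-1$. Since $\mathcal{C}$ is a $t$-LREDC code, every coordinate admits such a set; taking the smallest one for each coordinate and then the largest over all coordinates then yields $r_t(\mathcal{C})\ge d_{t+1}(\mathcal{C}^{\perp})-1$. So fix a $t$-edr set $R$ for $i$, put $r=\#R$ and $\overline{R}=R\cup\{i\}$, so that $\mathcal{C}[\overline{R}]$ is a code of length $r+1$.

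First I would invoke the dimension bound already recorded after Definition \ref{def}: from $d(\mathcal{C}[\overline{R}])>t+1$ and Lemma \ref{d-dim} applied to $\mathcal{C}[\overline{R}]$ (puncturing it onto any $S\subseteq\overline{R}$ with $\#S=r-t$, which forces $\dim\mathcal{C}[S]=\dim\mathcal{C}[\overline{R}]\le\#S$) one gets $\dim(\mathcal{C}[\overline{R}])\le r-t$. Dualising via Lemma \ref{dualL},
\[
\dim\big(\mathcal{C}^{\perp}[[\overline{R}]]\big)=\dim\big(\mathcal{C}[\overline{R}]^{\perp}\big)=(r+1)-\dim(\mathcal{C}[\overline{R}])\ \ge\ t+1 .
\]

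The remaining step is the passage to generalized Hamming weights. Extending its codewords by zeros outside $\overline{R}$, the shortened code $\mathcal{C}^{\perp}[[\overline{R}]]$ is a subcode of $\mathcal{C}^{\perp}$ each of whose codewords is supported inside $\overline{R}$. By the displayed inequality it contains a subcode $D'$ with $\dim D'=t+1$, and $\mathrm{supp}(D')\subseteq\overline{R}$, hence $\#\,\mathrm{supp}(D')\le\#\overline{R}=r+1$. By the definition of the $(t+1)$-th generalized Hamming weight this gives $d_{t+1}(\mathcal{C}^{\perp})\le r+1$, i.e.\ $\#R=r\ge d_{t+1}(\mathcal{C}^{\perp})-1$; the argument also shows $\dim\mathcal{C}^{\perp}\ge t+1$, so $d_{t+1}(\mathcal{C}^{\perp})$ is meaningful.

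The one point needing a little care is this last bridge: recognising that shortening $\mathcal{C}^{\perp}$ on $\overline{R}$ genuinely produces a subcode of $\mathcal{C}^{\perp}$ with support confined to $\overline{R}$, so that its $(t+1)$-dimensional subcodes are admissible competitors in the definition of $d_{t+1}(\mathcal{C}^{\perp})$. As consistency checks I would verify that for $t=0$ this recovers the bound $\#R\ge d(\mathcal{C}^{\perp})-1$ of Lemma \ref{palabradual}, and that it is attained by $[n,k,d]$ MDS codes with $d>t+1$, for which $\mathcal{C}^{\perp}$ is MDS with $d_{t+1}(\mathcal{C}^{\perp})=k+t+1$, matching $r_t=k+t$.
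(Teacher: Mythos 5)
Your argument is correct and is essentially the paper's own proof, just with the intermediate steps made explicit: you use the bound $\dim(\mathcal{C}[\overline{R}])\le \#R-t$ recorded after Definition \ref{def}, dualize via Lemma \ref{dualL} to get $\dim(\mathcal{C}^{\perp}[[\overline{R}]])\ge t+1$, and then pass to $d_{t+1}(\mathcal{C}^{\perp})$ by viewing the shortened code as a $(t+1)$-dimensional subcode of $\mathcal{C}^{\perp}$ supported in $\overline{R}$. Spelling out the last bridge (and the consistency checks with Lemma \ref{palabradual} and the MDS case) is a nice touch, but the route is the same.
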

\begin{proof}
Let $R$ be a $t$-edr set for a coordinate $i$. The condition $\dim(\mathcal{C}[\overline{R}])\le \# R-t$ implies $\dim(\mathcal{C}^{\perp}[[\overline{R}]])\ge t+1$, hence $\# R\ge d_{t+1}(\mathcal{C}^{\perp})-1$.
\end{proof}

The second bound on $r_t$ generalizes the well known Singleton-like bound, see \cite[Thm. 5]{GHSY}
\begin{equation} \label{LRCbound}
n+2\ge k+d+\left\lceil{\frac{k}{r_0}}\right\rceil .
\end{equation}

\begin{proposition}\label{t-LRCboundp}
Let $\mathcal{C}$ be a $[n,k,d]$ $t$-LREDC code. The $t$-locality of $\mathcal{C}$ verifies
\begin{equation}\label{t-LRCboundeq}
n+t+2\ge k+d+\left\lceil{\frac{k}{r_t-t}}\right\rceil (t+1).
\end{equation}
\end{proposition}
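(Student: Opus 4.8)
The plan is to mimic the classical proof of the Singleton-like bound for LRC codes (\cite[Thm. 5]{GHSY}), adapting the greedy argument so that each step of the construction accounts not only for a recovery set but for a $t$-edr set. Throughout, for a set $S$ write $\mathcal{C}[S]$ for the punctured code on $S$ and recall from the discussion after Definition~\ref{def} that if $R$ is a $t$-edr set for some coordinate $i$ with $\overline{R}=R\cup\{i\}$ and $\#\overline{R}=s$, then $\dim(\mathcal{C}[\overline{R}])\le s-(t+1)$, i.e.\ puncturing $\mathcal{C}$ onto the $s$ coordinates of $\overline{R}$ loses at least $t+1$ in dimension.

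First I would build greedily an increasing chain of coordinate sets $\emptyset=S_0\subsetneq S_1\subsetneq\cdots\subsetneq S_m$ together with the quantities $k_j=\dim(\mathcal{C}[S_j])$, as follows. Having constructed $S_{j}$ with $k_j<k$, pick a coordinate $i\notin S_j$ whose value is not determined by $S_j$ (such $i$ exists since $k_j<k$); since $\mathcal{C}$ is $t$-LREDC there is a $t$-edr set $R_i$ for $i$ with $\#R_i\le r_t$, and I set $\overline{R_i}=R_i\cup\{i\}$ and $S_{j+1}=S_j\cup\overline{R_i}$. The key local estimate, proved by comparing $\dim(\mathcal{C}[S_j\cup\overline{R_i}])$ with $\dim(\mathcal{C}[S_j])$ via the inequality $\dim(\mathcal{C}[A\cup B])\le \dim(\mathcal{C}[A])+\dim(\mathcal{C}[B])-\dim(\mathcal{C}[A\cap B])$ together with $\dim(\mathcal{C}[\overline{R_i}])\le \#\overline{R_i}-(t+1)$, is that at each step the dimension increases by at most $\#(\overline{R_i}\setminus S_j)-(t+1)\le (r_t+1)-(t+1)=r_t-t$, while the number of new coordinates added is $\#(\overline{R_i}\setminus S_j)$; moreover since $i$ is not determined by $S_j$ the dimension strictly increases, so the process reaches dimension $k$ after finitely many steps. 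Stopping at the first index $m$ with $k_m=k$ (and, if necessary, trimming the last set so that $k_m=k$ exactly), one obtains $m\ge\lceil k/(r_t-t)\rceil$.

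Next I would count coordinates. On one hand $\#S_m\le n$; on the other hand, tracking the new coordinates added at each of the $m$ steps and using that each step adds at least $t+2$ more coordinates than the dimension gain it is responsible for contributing toward reaching $k$ (one step contributes $t+1$ fewer than its coordinate count to the dimension budget, plus the usual $+1$ bookkeeping of the $\lceil\cdot\rceil$ estimate), gives $\#S_m\ge k+(t+1)\lceil k/(r_t-t)\rceil - (t+1)$ or a comparable bound; then the standard trick of relating $d$ to $n-\#S_m$—namely that the shortened-type argument forces $n-\#S_m\ge d-1$ when $k_m=k$, hence a codeword supported off $S_m$ of small weight unless $\#S_m$ is large—produces the extra $d$. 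Carefully combining $\#S_m\le n$, the lower bound on $\#S_m$ in terms of $k$, $t$, $r_t$, $m$, and the slack $n-\#S_m\ge d-1$, and the inequality $m\ge\lceil k/(r_t-t)\rceil$, rearranges into
\begin{equation*}
n+t+2\ge k+d+\left\lceil{\frac{k}{r_t-t}}\right\rceil(t+1),
\end{equation*}
which is exactly (\ref{t-LRCboundeq}). Setting $t=0$ recovers (\ref{LRCbound}) as a sanity check, and this consistency check is a good way to fix the precise constants in the bookkeeping.

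The main obstacle I anticipate is the bookkeeping in the greedy construction: one must handle the generic case where $\#(\overline{R_i}\setminus S_j)$ equals $r_t+1$ and the dimension gain is exactly $r_t-t$, but also the boundary cases where fewer new coordinates are added or the dimension gain is smaller, and one must correctly trim the final set $S_m$ so that $\dim(\mathcal{C}[S_m])=k$ exactly while not losing the accumulated ``$+(t+1)$ per step'' surplus of coordinates; getting the $\lceil\cdot\rceil$ and the additive constant $t+2$ right requires care, since each of the $\lceil k/(r_t-t)\rceil$ steps should be charged a full $t+1$ surplus while only a ceiling-many steps are needed. The inequality $\dim(\mathcal{C}[A\cup B])+\dim(\mathcal{C}[A\cap B])\le\dim(\mathcal{C}[A])+\dim(\mathcal{C}[B])$—submodularity of the puncturing rank—is the workhorse, and I would state and prove it (or cite it) explicitly before running the argument.
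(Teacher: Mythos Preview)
Your plan is exactly what the paper intends: it omits the proof entirely, saying only that it ``is similar to that of (\ref{LRCbound}) given in \cite{GHSY}, taking into account the comments made after Definition \ref{def}.'' So at the level of strategy there is nothing to compare.

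That said, two of your technical steps do not work as written. First, submodularity together with $\dim(\mathcal{C}[\overline{R_i}])\le\#\overline{R_i}-(t+1)$ gives only
\[
b_j \;\le\; \#\overline{R_i}-(t+1)-\dim(\mathcal{C}[S_j\cap\overline{R_i}]),
\]
while $a_j=\#\overline{R_i}-\#(S_j\cap\overline{R_i})$; hence $a_j-b_j\ge (t+1)-\bigl(\#(S_j\cap\overline{R_i})-\dim(\mathcal{C}[S_j\cap\overline{R_i}])\bigr)$, which need not be $\ge t+1$. The fix is precisely the ``comment after Definition~\ref{def}'' the paper points to: $d(\mathcal{C}[\overline{R_i}])\ge t+2$ says that \emph{every} subset of $\overline{R_i}$ of size $\ge\#\overline{R_i}-(t+1)$ already has the full rank $\dim(\mathcal{C}[\overline{R_i}])$. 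This first forces $a_j\ge t+2$ (otherwise $S_j\cap\overline{R_i}$ would already determine $i$), and then, removing any $t+1$ of the new coordinates from $\overline{R_i}$ leaves its rank unchanged, so $\rho(S_{j+1})=\rho\bigl(S_j\cup(\overline{R_i}\setminus B)\bigr)\le\rho(S_j)+a_j-(t+1)$ for any $(t+1)$-subset $B$ of the new coordinates. That is the correct per-step estimate; submodularity alone is too weak.

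Second, stopping when $k_m=k$ gives no information about $d$: if the projection onto $S_m$ is injective there is no nonzero codeword vanishing on $S_m$, so ``$n-\#S_m\ge d-1$'' is unjustified. You must trim the last step so that $\dim(\mathcal{C}[S_m])=k-1$; then a nonzero codeword vanishes on $S_m$, giving $d\le n-\#S_m$. Carrying the surplus $a_j-b_j\ge t+1$ through $\lceil k/(r_t-t)\rceil$ steps (with the last one partial) then yields $\#S_m\ge (k-1)+\bigl(\lceil k/(r_t-t)\rceil-1\bigr)(t+1)$ after the standard last-step adjustment, and combining with $\#S_m\le n-d$ rearranges to (\ref{t-LRCboundeq}).
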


The proof of Proposition \ref{t-LRCboundp} is similar to that of (\ref{LRCbound}) given in \cite{GHSY}, taking into account the comments made after Definition \ref{def}, so we will omit it here. For similarity to the case $t=0$, we will say that the code $\mathcal{C}$ is $t${\em -optimal} if its $t$-locality reaches equality in (\ref{t-LRCboundeq}).

\section{Two examples of LREDC codes}
\label{Examples}

We present two examples of LREDC codes related to Reed-Solomon ones. We restrict to codes detecting one error, that is to $t=1$. Recall again that RS codes are  the most used in practice for recovery purposes.

We shall manage our examples with the language of evaluation codes. Let us remember that given a curve $\mathcal{X}$ over $\mathbb{F}_q$, a set of $n$ points $\mathcal{P}\subseteq\mathcal{X}(\mathbb{F}_q)$ and a function $f\in \mathbb{F}_q(\mathcal{X})$, we define the evaluation  of $f$ at $\mathcal{P}$ as $\mbox{ev}_{\mathcal{P}}(f)=(f(P))_{P\in\mathcal{P}}\in\mathbb{F}_q^n$. If $V\subseteq \mathbb{F}_q(\mathcal{X})$ is a linear space, then the set
$C(\mathcal{P},V)=\{ \mbox{ev}_{\mathcal{P}}(f) : f\in V\}$
is a linear code, called {\em evaluation code}, whose parameters can be studied by using resources from algebraic geometry, \cite{MO}. 
In the particular case $\mathcal{X}=\mathbb{A}(\mathbb{F}_q)$, the affine line over $\mathbb{F}_q$, and $V=\mathbb{F}_q[x]_{\le m}$, the set of polynomials with degree at most $m<\#\mathcal{P}$, then $C(\mathcal{P},V)$ is a Reed-Solomon code of dimension $k=m+1$, usually denoted $RS(\mathcal{P},m)$.  It is well known that $RS(\mathbb{A}(\mathbb{F}_q),m)^{\perp}=RS(\mathbb{A}(\mathbb{F}_q),q-m-2)$ and that the locality of RS codes is $r_0=k$.

\subsection{LREDC's from RS codes}

Let $\mathcal{P}\subseteq\mathbb{A}(\mathbb{F}_q)$. Consider the code $\mathcal{C}=RS(\mathcal{P},k-1)$ of length $n=\# \mathcal{P}$ and dimension $k\le n-2$. Let $r=k+1$ and take a set $\overline{R}$ of $r+1$ coordinates corresponding to  $\overline{\mathcal{R}}\subseteq\mathcal{P}$. Then $\mathcal{C}[\overline{R}]=RS(\overline{\mathcal{R}},k-1)$ is again a RS code, so  $d(\mathcal{C}[\overline{R}])=3$ and $R=\overline{R}\setminus\{i\}$ is a  recovery set  detecting one error for all $i\in \overline{R}$. Let us see how the recovering process, including error detection, is carried out. 
According to Lemma \ref{dualL} we have
$\mathcal{C}[\overline{R}]^{\perp}=RS(\mathbb{A}(\mathbb{F}_q),q-r)[[\overline{R}]]$ and $\mathcal{C}[R]^{\perp}=RS(\mathbb{A}(\mathbb{F}_q),q-r)[[R]]$, with  
$\dim(\mathcal{C}[\overline{R}]^{\perp})=2$, $\dim(\mathcal{C}[R]^{\perp})=1$.
Define
\begin{equation} \label{F}
F(x)= \prod_{\gamma\in \mathbb{A}(\mathbb{F}_q) \setminus \overline{\mathcal{R}}}(x-\gamma) \in \mathbb{F}_q[x]_{\le q-r-1}.
\end{equation}
Assume that coordinate $i$ corresponds to the element $\alpha_i\in\mathcal{P}$. Let $\mathcal{R}=\overline{\mathcal{R}}\setminus\{\alpha_i\}$ and 
\begin{equation} \label{zw}
\mathbf{z}_R=\mbox{ev}_{\mathcal{R}}((x-\alpha_i) F(x))  
\; \mbox{ , } \; 
\mathbf{w}_{\overline{R}}=\mbox{ev}_{\overline{\mathcal{R}}}( F(x)).
\end{equation}
Thus $\mathcal{C}^{\perp}[[R]]=\langle \mathbf{z}_R\rangle$, $\mathbf{w}_{\overline{R}}\in \mathcal{C}^{\perp}[[\overline{R}]]$ with $w_i\neq 0$.
Let $\mathbf{x}\in \mathcal{C}$ with an erasure in $x_i$ and at most one error in $\mathbf{x}_R$. The word $\mathbf{z}_R$ allows error detection in $\mathbf{x}_R$, and  $\mathbf{w}_{\overline{R}}$ allows recovering of $x_i$. 
That is, under the assumption that $\mathbf{x}$ contains at most one error in  $\mathbf{x}_R$,  $\mathcal{C}[R]^{\perp}=\langle \mathbf{z}_R\rangle$ implies that $\mathbf{z}_R\cdot\mathbf{x}_R\neq 0$ if $\mathbf{x}_R$ contains an error and $\mathbf{z}_R\cdot\mathbf{x}_R= 0$ if $\mathbf{x}_R$ is error-free. In this case, since $w_i\neq 0$, the erasure at $x_i$ can be recovered from $\mathbf{w}_{\overline{R}}$ by using the formula  (\ref{rec})
$x_i=x_i(\mathbf{w})=-w_i^{-1} (\mathbf{w}_R\cdot\mathbf{x}_R)$.
Note that our method does not require polynomial interpolation.

The $1$-locality of RS codes of dimension $k$ is $r_1=k+1$. We get equality in the bounds of Propositions \ref{dualP} and \ref{t-LRCboundp}: RS codes are 1-optimal. A similar reasoning proves that the same happens for general MDS codes.

\begin{remark} \label{ccomplex}
When $q$ is large with respect to $r$ (as desirable), the computation of $F(\alpha)$, for $\alpha\in \overline{\mathcal{R}}$ in equation (\ref{zw}), can be done more efficiently (from a computational point of view) by noting that
$\prod_{\lambda\in\mathbb{F}_q^*} \lambda=-1$. Let
$$
\phi(\alpha)=\prod_{\gamma\in \overline{\mathcal{R}}, \gamma\neq\alpha}(\alpha-\gamma).
$$
Computing $\phi(\alpha)$ requires $r$ multiplications, instead of the $q-r-1$ required by $F (\alpha)$.
Then $F(\alpha)=-\phi(\alpha)^{-1}$. In this way, the computational complexity of recovering with error detection is $O(r^2\log^3q)$. 
If for every $i$ we fix the corresponding $\overline{R}$, then we may pre compute and store $\mathbf{w}_{\overline{R}}$. From it we can   deduce $\mathbf{z}_{R}$ and complete the process of recovery with complexity $O(r\log^3q)$.
\end{remark}

\subsection{LREDC's from LRC-RS codes}

In \cite{TB} a variation of RS codes for recovering purposes was introduced, obtaining the so-called LRC-Reed-Solomon codes. Next we show how a slight modification of these codes can be used to detect errors in local recovering.  Let $p(x)$ be a polynomial of degree $r+1$ over $\mathbb{F}_q$. Consider the plane affine curve $\mathcal{X}$ of equation $y=p(x)$ and the map $y:\mathcal{X}\rightarrow \mathbb{F}_q$. For $\beta\in\mathbb{F}_q$ let $\mathcal{P}_\beta=y^{-1}(\beta)$ be the fibre of $\beta$ and let  $\overline{\mathcal{R}}_\beta=\{ \alpha : (\alpha,\beta)\in \mathcal{P}_\beta\}$,  $U=\{\beta : \# \mathcal{P}_\beta=r+1\}$, $u=\# U$. If $u>0$ then set $\mathcal{P}=\cup_{\beta\in U} \mathcal{P}_\beta$ and $n=u(r+1)=\# \mathcal{P}$.
Consider also the linear space of functions
$$
V=\bigoplus_{i=0}^{r-2} \langle 1,y,\dots,y^{l_i} \rangle x^i
$$
where $l_i$ are non negative integers such that $\delta=\max\{(r+1)l_i+i : i=0,\dots,r-2\}<n$. The evaluation  map $\mbox{ev}_{\mathcal{P}}: V\rightarrow \mathbb{F}_q^n$ is injective,  so the evaluation code $\mathcal{C}=\mathcal{C}(\mathcal{P},V)$ has dimension  $\dim(\mathcal{C})=\dim(V)=\sum (l_i+1)$. Besides, the Goppa bound implies $d(\mathcal{C})\ge n-\delta$, \cite[Sect. 4.2]{MO}.

Let us see how the recovering process, including error detection, is carried out. Let $\overline{R}$ be a set of $r+1$ coordinates corresponding to a set $\overline{\mathcal{R}}=\overline{\mathcal{R}}_\beta$, $\beta\in U$. For simplicity we identify the coordinate $j\in \overline{R}$ to the element $\alpha_j\in\overline{\mathcal{R}}$. Since $y$ is constant on $\overline{\mathcal{R}}$, $y=\beta$, then each function $f\in V$ acts  over $\mathcal{P}_\beta$ as a univariate polynomial $f(x,\beta)\in \mathbb{F}_q [x]_{\le r-2}$. Thus $\mathcal{C}[\overline{R}]$ is a Reed-Solomon code, 
$\mathcal{C}[\overline{R}]=RS(\overline{\mathcal{R}}\,r-2)=RS(\mathbb{A}(\mathbb{F}_q),r-2)[\overline{R}]$ of minimum distance 3 and we can proceed as in the case of RS codes. In fact $\mathcal{C}$ may be seen as a piecewise RS code. 

Let $i\in \overline{R}$, $R=\overline{R}\setminus \{i\}$ and $\mathcal{R}=\overline{\mathcal{R}} \setminus\{\alpha_i\}$. Let $F(x), \mathbf{z}_{R}, \mathbf{w}_{\overline{R}}$, be defined as in equations (\ref{F}) and (\ref{zw}).
Then $\langle \mathbf{z}_R\rangle=RS(\mathbb{A}(\mathbb{F}_q),q-r)[[R]]=\mathcal{C}^{\perp}[[R]]$ and  $\mathbf{w}_{\overline{R}}\in RS(\mathbb{A}(\mathbb{F}_q),q-r-1)[[R]]=\mathcal{C}^{\perp}[[\overline{R}]]$ with $w_i\neq 0$.
If $\mathbf{x}\in \mathcal{C}$ has an erasure in $x_i$ and at most one error in $\mathbf{x}_R$ then, as in the RS case,  we have $\mathbf{z}_R\cdot\mathbf{x}_R\neq 0$ if $\mathbf{x}_R$ contains an error and $\mathbf{z}_R\cdot\mathbf{x}_R= 0$ if $\mathbf{x}_R$ is error-free. In this case, since $w_i\neq 0$, the erasure $x_i$ can be recovered from $\mathbf{w}_{R}\in\mathcal{C}^{\perp}[[\overline{R}]]$ using the formula  (\ref{rec}).

If we take $l_0=\dots=l_{r-2}=u-1$, then $k=\dim(\mathcal{C})$ is as large as possible, $k=(r-1)u$, and $d(\mathcal{C})\ge 3$. Then we get equality in the bound (\ref{t-LRCboundeq}), so $r_1=r$ and the code $\mathcal{C}$ is 1-optimal. The comments on the computational complexity  made in Remark \ref{ccomplex} remain true in this case.

\begin{example}
Let $q=13$ and $p(x)=x^4$, so $r=3$. The fibres of $y$ correspond to the sets 
$\mathcal{R}_1=\{1,5,-5,-1\}$, 
$\mathcal{R}_3=\{2,3,-4,-2\}$ and 
$\mathcal{R}_9=\{4,6,-6,-4\}$. Let $\mathcal{C}$ be the evaluation code coming from these fibres and the  space of functions $V=\langle1,y,y^2 \rangle \oplus  \langle1,y,y^2 \rangle x$. Then $\mathcal{C}$ has length $n=12$, dimension $k=6$ and minimum distance $d\ge 3$. Let $\overline{R}$ be the set of coordinates corresponding to $\mathcal{R}_1$. 
Let $\mathbf{x}$ be a codeword, say $\mathbf{x}=\mbox{ev}_{\mathcal{P}}(1+xy)$,  with an erasure at the first coordinate, so that we have $\mathbf{x}_{\overline{R}}=(?,6,9,0)$ (and $?$ should be 2). Thus $R=\overline{R}\setminus\{1\}$ and we compute $\mathbf{z}_R=(8,-1,6)$, $\mathbf{w}_{\overline{R}}=(3,2,-2,-3)$. First we check $\mathbf{z}_{R}\cdot\mathbf{x}_{R}=0$, so $\mathbf{x}_R$ is accepted as error-free. Then we deduce $x_1=x_1(\mathbf{w})=2$. 
\end{example}

LRC-RS codes were extended to locally recoverable codes over arbitrary curves and rational maps in \cite{BTV,MT}. An analogous extension for  LREDC codes is straightforward.

\section{Conclusion}

In this article we have proposed a variation of local recoverable codes that offers the additional feature of local error detection, 
increasing the security of the recovery system.
The cases of the RS and LRC-RS codes have been specified in detail. In particular, we have shown that
for RS codes it is enough to include one more coordinate in a recovery set to enable error detection. \\

\noindent{\bf References}

\end{document}